\newtheorem{theorem}{Theorem}
\newtheorem{definition}[theorem]{Definition}
\newtheorem{lemma}[theorem]{Lemma}
\newtheorem{proposition}[theorem]{Proposition}
\newtheorem{problem}[theorem]{Problem}
\def\Prop #1: #2\par{\medbreak \noindent {\bf Proposititon #1 :}\enspace
{\rm #2}\smallskip\noindent}
\def\Df #1: {\medbreak \noindent {\bf Definition #1} :\enspace}
\def\df #1: #2\par{\smallskip\noindent {\bf #1}\enspace{\bf
Defin\'{\i}ci\'o}:\enspace #2\smallskip\noindent}
\def\Al #1: #2\par{\smallskip\noindent {\bf #1}\enspace
{\bf \'All\'{\i}t\'as}:\enspace #2\smallskip\noindent}
\def\C{$C^{\ast}$-}
\def\cL{{\cal L}}
\def\c{{\bot}} %ortogonal complement
\def\vagy{\vee}
\def\es{\wedge}
\title{Characterizing common cause closedness \\
of quantum probability theories\footnote{Submitted for publication}}
\author{
Yuichiro Kitajima\\
College of Industrial Technology\\ 
Nihon University\\
2-11-1 Shin-ei, Narashino, Chiba 275-8576, Japan\\
kitajima.yuichirou@nihon-u.ac.jp
\\
\\
Mikl\'os R\'edei\\
Department of Philosophy, Logic and Scientific Method\\
London School of Economics and Political Science\\
Houghton Street, London WC2A 2AE, United Kingdom\\
and\\
Institute of Philosophy \\
Research Center for the Humanities \\
Hungarian Academy of Sciences\\ 
Orsz\'agh\'az utca 30, 1014
Budapest, Hungary\\
m.redei@lse.ac.uk
}
\begin{document}
\maketitle

\abstract{We prove new results on common cause closedness of quantum probability spaces, where by a quantum probability space is meant the projection lattice of a non-commutative von Neumann algebra together with a countably additive probability measure on the lattice. Common cause closedness is the feature that for every correlation between a pair of commuting projections there exists in the lattice a third projection commuting with both of the correlated projections and which is a Reichenbachian common cause of the correlation. The main result we prove is that a quantum probability space is common cause closed if and only if it has at most one measure theoretic atom. This result improves earlier ones published in \cite{GyenisZ-Redei2013Erkenntnis}. The result is discussed from the perspective of status of the Common Cause Principle. Open problems on common cause closedness of  general probability spaces $(\mathcal{L},\phi)$ are formulated, where $\cL$ is an orthomodular bounded lattice and $\phi$ is a probability measure on $\mathcal{L}$.
}

\bigskip
Keywords: 
Reichenbachian common cause, common cause principle, orthomodular lattices

\section{The main result}

In this paper we prove new results on common cause closedness of quantum probability spaces. By a quantum probability space is meant here the projection lattice of a non-commutative von Neumann algebra together with a countably additive probability measure on the lattice. Common cause closedness is the feature that for every correlation between a pair of commuting projections there exists in the lattice a third projection commuting with both of the correlated projections and which is a Reichenbachian common cause of the correlation.

The main result we prove is that a quantum probability space is common cause closed if and only if it has at most one measure theoretic atom. Since classical, Kolmogorovian probability spaces were proved in \cite{GyenisZ-Redei2011} to be common cause closed if and only if they contained at most one measure theoretic atom, and since classical probability spaces also can be regarded as projection lattices of \emph{commutative} von Neumann algebras, this result gives a complete characterization of common cause closedness of probability spaces in the category of von Neumann algebras. Previous results on common cause closedness of quantum probability spaces had to assume an additional, somewhat artificial and not very transparent feature of the quantum probability measure under which the quantum probability space could be proved to be common cause closed \cite{GyenisZ-Redei2013Erkenntnis}. With the removal of that condition it becomes visible that exactly the same type of measure theoretical structure is responsible for the common cause closedness (or lack of it) of classical and quantum probability spaces.

The broader context in which we give our proofs is the problem of characterization of common cause closedness of general probability spaces $(\cL,\phi)$, where $\cL$ is an orthocomplemented, orthomodular, bounded $\sigma$-lattice and $\phi$ is a countably additive general probability measure on $\cL$. (Classical and quantum probability spaces are obviously special examples of abstract probability spaces.) Little is known about the problem of common cause closedness in this generality however. A sufficient condition for common cause closedness of general probability theories is known (Proposition 3.10 in \cite{GyenisZ-Redei2013Erkenntnis}, recalled here as Proposition \ref{Q_closed}) but the condition is exactly the not very natural one that could be eliminated  both in classical and in quantum probability spaces, and one would like to know whether it also can be eliminated (or replaced by a more natural one) in general probability theories (Problem \ref{prob:sufficient}). It also is unknown whether the condition which is necessary for common cause closedness of quantum probability spaces is necessary for the common cause closedness of general probability theories as well (Problem \ref{prob:necessary}). Further open questions and possible directions of investigation will be indicated in section \ref{sec:open}.

The conceptual-philosophical significance of common cause closed probability spaces is that they display a particular form of causal completeness: these theories themselves can explain, exclusively in terms of common causes that they contain, \emph{all} the correlations they predict; hence these theories comply in an extreme manner with the Common Cause Principle. Probabilistic physical theories in which the probability space is measure theoretically purely non-atomic are therefore good candidates for being a confirming evidence for the Common Cause Principle. Section \ref{sec:discussion} discusses this foundational-philosophical significance of the presented results in the context of the more general problem of assessing the status of the Common Cause Principle.

Further sections of the paper are organized as follows.  Section \ref{sec:notations} fixes some notation and recalls some basic definitions in lattice theory. In section \ref{sec:definitionCCC} the notion of common cause in a general probability theories is defined. In Section \ref{sec:sufficient} it is shown that for a probability space, classical or quantum, to be common cause closed it is \emph{sufficient} that they have at most one measure theoretic atom (Propositions \ref{closed_Boolean} and \ref{closed_algebra}). Section \ref{sec:necessary} proves that this condition is also \emph{necessary}, both in case of classical probability spaces (Proposition \ref{characterization_Boolean}) and in quantum probability spaces (Proposition \ref{characterization_algebra}). Section \ref{sec:open} formulates some problems that are open at this time.

\section{General probability spaces –- definitions and notations\label{sec:notations}}

Throughout the paper $\mathcal{L}$ denotes an orthocomplemented lattice with lattice operations $\vagy,\es$ and orthocomplementation $\c$. The lattice $\cL$  is called orthomodular if, for any $A, B \in \mathcal{L}$ such that $A \leq B$, we have
\begin{equation}\label{eq:orthomodular}
B=A \vee (B \wedge A^{\perp})
\end{equation}
The lattice $\mathcal{L}$ is called a Boolean algebra if it is distributive, i.e. if for any $A,B,C \in \mathcal{L}$ we have
\begin{equation}\label{eq:distributive}
A \vee (B \wedge C)=(A \vee B) \wedge (A \vee C)
\end{equation}
It is clear that a Boolean algebra is an orthomodular lattice. Other examples of orthomodular lattices are the lattices of projections of a von Neumann algebra; they are called von Neumann lattices. The projection lattice of a von Neumann algebra is distributive if and only if the von Neumann algebra is commutative. A basic reference for orthocomplemented lattices is \cite{Kalmbach1983}. For a summary of basic facts about von Neumann algebras and von Neumann lattices we refer to \cite{Redei1998}, for the theory of von Neumann algebra our reference is \cite{Kadison-Ringrose1986}. The paper \cite{Redei-Summers2007} gives a concise review of the basics of quantum probability theory.

If, for every subset $S$ of $\mathcal{L}$, the join and the meet of all elements in $S$ exist, then $\mathcal{L}$ is called a complete orthomodular lattice. If the join and meet of all elements of every countable subset $S$ of $\mathcal{L}$ exist in $\mathcal{L}$, then $\mathcal{L}$ is called a $\sigma$-lattice. Von Neumann lattices are complete hence $\sigma$-complete. In the present paper, it is assumed that lattices are bounded: they have a smallest and a largest element denoted by $0$ and $I$, respectively.

Let $\mathcal{L}$ be a $\sigma$-complete orthomodular lattice. Elements $A$ and $B$ are called mutually orthogonal if $A \leq B^{\perp}$. The map $\phi:\mathcal{L} \rightarrow [0,1]$ is called a (general) probability measure if $\phi(I)=1$ and $\phi(A \vee B)=\phi(A)+\phi(B)$ for any mutual orthogonal elements $A$ and $B$.
A probability measure $\phi$ is called a $\sigma$-additive probability measure if for any countable, mutually orthogonal elements $\{ A_i | i \in \mathbb{N} \}$, we have
\begin{equation}
\phi(\vee_{i \in \mathbb{N}} A_i)=\sum_{i \in \mathbb{N}} \phi(A_i)
\end{equation}

Next we consider atoms. There are two types of atoms: algebraic and measure theoretic. An element $A \in \mathcal{L}$ is called an algebraic atom if $A>0$ and, for any $B \leq A$ we have $B=A$ or $B=0$. The other type of atom depends on a probability measure on $\mathcal{L}$. Let $\phi$ be a probability measure on $\mathcal{L}$. An element $A \in \mathcal{L}$ is called a $\phi$-atom if $\phi(A)>0$ and, for any $B \leq A$ we have $\phi(B)=\phi(A)$ or $\phi(B)=0$.

A probability measure $\phi$ on $\mathcal{L}$ is called
\begin{itemize}
\item purely atomic if for any $A \in \mathcal{L}$ with $\phi(A) > 0$ there exists a $\phi$-atom $B \in \mathcal{L}$ such that $B \leq A$,
\item purely nonatomic if for any $A \in \mathcal{L}$ with $\phi(A)>0$ there exists an element $B \in \mathcal{L}$ such that $B < A$ and $0<\phi(B)<\phi(A)$.
\end{itemize}

If $\phi(A)=0$ implies $A=0$ for any $A \in \mathcal{L}$, then $\phi$ is called faithful. Roughly speaking, this condition means that the elements whose probabilities are zero are ignored because such elements are identified with the zero element. According to the following Lemma, we can identify a $\phi$-atom with an atom in the case where $\phi$ is faithful. Since we will deal with faithful measures in the paper, algebraic and measure theoretic atoms can be identified and this will be done implicitly in this paper.

\begin{lemma}
\label{faithful}
Let $\phi$ be a faithful probability measure on $\mathcal{L}$. $A$ is a $\phi$-atom if and only if $A$ is an atom.
\end{lemma}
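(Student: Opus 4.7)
The plan is to prove both directions separately, using faithfulness at each step to convert between the zero element and probability zero, and orthomodularity to handle the ``probability equal'' case.

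For the easier direction (atom $\Rightarrow$ $\phi$-atom): assume $A$ is an algebraic atom, so $A > 0$, hence $A \neq 0$. By the contrapositive of faithfulness, $\phi(A) > 0$. Now given any $B \leq A$, the atom property forces $B = 0$ or $B = A$, and correspondingly $\phi(B) = 0$ or $\phi(B) = \phi(A)$, which is exactly the $\phi$-atom condition.

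For the harder direction ($\phi$-atom $\Rightarrow$ atom): assume $A$ is a $\phi$-atom, so $\phi(A) > 0$ and in particular $A \neq 0$, i.e., $A > 0$. Take any $B \leq A$. By definition of $\phi$-atom there are two cases. If $\phi(B) = 0$, then faithfulness gives $B = 0$ immediately. The interesting case is $\phi(B) = \phi(A)$: here I would invoke orthomodularity (equation \eqref{eq:orthomodular}) to write $A = B \vee (A \wedge B^{\perp})$. Since $A \wedge B^{\perp} \leq B^{\perp}$, the two joinands are orthogonal, so additivity of $\phi$ yields $\phi(A) = \phi(B) + \phi(A \wedge B^{\perp})$. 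Combined with $\phi(B) = \phi(A)$, this gives $\phi(A \wedge B^{\perp}) = 0$, and one more application of faithfulness gives $A \wedge B^{\perp} = 0$, hence $A = B \vee 0 = B$. Thus every $B \leq A$ is $0$ or $A$, i.e., $A$ is an atom.

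The only step with any content is the ``probability equal'' case, and the main obstacle (really the only one) is recognizing that orthomodularity is exactly the ingredient needed to decompose $A$ into $B$ and its relative complement $A \wedge B^{\perp}$ so that additivity applies. Without orthomodularity one could not in general conclude from $B \leq A$ and $\phi(B) = \phi(A)$ that $B = A$; this is why the lemma is stated for orthomodular lattices.
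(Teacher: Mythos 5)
Your proof is correct and follows essentially the same route as the paper's: the forward direction is immediate from faithfulness, and the reverse direction uses orthomodularity to write $A = B \vee (A \wedge B^{\perp})$, additivity to get $\phi(A \wedge B^{\perp}) = 0$, and faithfulness to conclude $B = A$. (Your version is in fact slightly cleaner, since the paper's displayed identity contains what appears to be a typo, writing $A^{\perp} \wedge B$ where $A \wedge B^{\perp}$ is meant.)
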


\begin{proof}
Let $A$ be a $\phi$-atom. For any $B$ such that $B \leq A$, $\phi(B)=\phi(A)$ or $\phi(B)=0$. $\phi(B)=0$ implies $B=0$ because $\phi$ is faithful. $\phi(B)=\phi(A)$ implies $\phi(A^{\perp} \wedge B)=0$, so that $A = B \vee (A^{\perp} \wedge B)=B$. This means that $A$ is an atom. It is trivial that $A$ is a $\phi$-atom if $A$ is an atom.
\end{proof}
We say that two elements $A$ and $B$ in an orthomodular lattice $\mathcal{L}$ are compatible if
\begin{equation}\label{eq:compatible}
A=(A \wedge B) \vee (A \wedge B^{\perp})
\end{equation}
It can be shown \cite{Kalmbach1983}[Theorem 3.2] that (\ref{eq:compatible}) holds if and only if
\begin{equation}
B=(B \wedge A) \vee (B \wedge A^{\perp})
\end{equation}
In other words, the compatibility relation is symmetric. If $\mathcal{L}$ is a Boolean algebra, any two elements in $\mathcal{L}$ are compatible.

\section{Definition of common cause closedness\label{sec:definitionCCC}}

In order to investigate common cause closedness in an orthomodular lattice, we must re-define both the concept of correlation and the notion of common cause of the correlation with which we define common cause closedness. The reason why these concepts have to be re-defined explicitly is that Reichenbach's original definition was given in terms of classical probability spaces \cite{Reichenbach1956}[Section 19], and in such probability spaces all random events are compatible. In the lattice $\cL$ of a general probability space there exist elements however which are not compatible. Hence it must be stipulated explicitly whether we allow (i) incompatible elements to be correlated and (ii) the common causes of correlations to be incompatible with the elements in the correlation. We take a conservative route by disallowing such cases:

Let $A$ and $B$ be compatible elements in $\cL$. We say that $A$ and $B$ are (positively) correlated in state $\phi$ on $\cL$ if
\begin{equation}\label{def:correlation}
\phi(A\es B)>\phi(A)\phi(B)
\end{equation}

\begin{definition}[\cite{Hofer-Redei-Szabo-PCC} Definition 6.1]
{\rm
Let $\mathcal{L}$ be an orthomodular lattice, let $\phi$ be a probability measure on $\mathcal{L}$ and let $A$ and $B$ be elements of $\mathcal{L}$. $C \in \mathcal{L}$ is called a common cause of the correlation (\ref{def:correlation}) if $C$ is compatible with both $A$ and $B$, $0<\phi(C)<1$ and $C$ satisfies the following equations:
\begin{eqnarray}
\label{occ1}
\frac{\phi(A \wedge B \wedge C)}{\phi(C)}&=&\frac{\phi(A \wedge C)}{\phi(C)}\frac{\phi(B \wedge C)}{\phi(C)}\\
\label{occ2}
\frac{\phi(A \wedge B \wedge C^{\perp})}{\phi(C^{\perp})}&=&\frac{\phi(A \wedge C^{\perp})}{\phi(C^{\perp})}\frac{\phi(B \wedge C^{\perp})}{\phi(C^{\perp})}\\
\label{occ3}
\frac{\phi(A \wedge C)}{\phi(C)} &>& \frac{\phi(A \wedge C^{\perp})}{\phi(C^{\perp})}\\
\label{occ4}
\frac{\phi(B \wedge C)}{\phi(C)} &>& \frac{\phi(B \wedge C^{\perp})}{\phi(C^{\perp})}
\end{eqnarray}
If $C$ is a common cause of the correlation between $A$ and $B$, and differs from both $A$ and $B$, then we call the element $C$ a nontrivial common cause.
}
\end{definition}
This definition is the complete analogue of Reichenbach's: it coincides with Reichenbach's original definition when one views it in the classical probability space consisting of the distributive sublattice of $\cL$ generated by the elements $A,B$ and $C$ with the restriction of $\phi$ to this Boolean algebra.

We can now give the central definition of the paper:
\begin{definition}
{\rm
Let $\mathcal{L}$ be an orthomodular lattice and let $\phi$ be a probability measure on $\mathcal{L}$. We say that $(\mathcal{L}, \phi)$ is common cause closed if, for any distinct, compatible elements $A$ and $B$ such that  $\phi(A \wedge B)>\phi(A)\phi(B)$, there is a nontrivial common cause $C$ in $\cL$ of the correlation.
}
\end{definition}
Note that one can argue for a more liberal definition of common cause in which the common cause is not required to be compatible with the correlating events. Common cause closedness can be investigated under such a definition of common cause as well but we do not wish to pursue this direction in this paper, see the papers \cite{Hofer1997}, \cite{Hofer1998}, \cite{Hofer-Vecsernyes2012b} and \cite{Hofer-Redei-Szabo-PCC}[Chapter 8] for the notion of non-commutative common cause and some of its features.

\section{A sufficient condition for common cause closedness\label{sec:sufficient}}

Gyenis and R\'{e}dei \cite{GyenisZ-Redei2011} gave a characterization of common cause closedness of classical probability measure spaces. In the proof of that characterization the following result of Johnson \cite{Johnson1970} played an important role. Let $(\mathcal{S}, p)$ be a classical probability measure space. Johnson \cite{Johnson1970} showed that the probability measure $p$ can be decomposed into probability measures $p_1$ and $p_2$,
\begin{equation}
p=\alpha p_1 + (1-\alpha) p_2
\end{equation}
for some $\alpha \in [0,1]$, where $p_1$ is purely atomic and $p_2$ is purely nonatomic.

A similar decomposition is not known to hold in a general probability space. Thus, in order to prove a sufficient condition for common cause closedness of general probability theories, Gyenis and R\'{e}dei \cite{GyenisZ-Redei2013Erkenntnis} introduced the notion of $Q$-decomposability: It is a special case of a decomposition of a probability measure.

\begin{definition}[\cite{GyenisZ-Redei2013Erkenntnis} p. 445]
{\rm
A $\sigma$-additive probability measure $\phi$ on a $\sigma$-complete orthomodular lattice $\mathcal{L}$ is said to be $Q$-decomposable, where $Q$ is an element of $\cL$, if there exists a decomposition $\phi=\alpha \phi_1 + (1-\alpha) \phi_2$, with $\alpha \in [0,1]$, furthermore $\phi_2$ is a purely nonatomic, $\sigma$-additive probability measure, and for $\phi_1$ we have:
\begin{equation}
\phi_1(A) = \begin{cases}
    1 & (\text{if} \ Q \leq A) \\
    0 & (\text{otherwise}).
  \end{cases}
\end{equation}
}
\end{definition}
Making use of $Q$-decomposability, Gyenis and R\'{e}dei \cite{GyenisZ-Redei2013Erkenntnis} strengthened Proposition 3.9 in \cite{Kitajima2008} in the following way:
\begin{proposition}[\cite{GyenisZ-Redei2013Erkenntnis} Proposition 3.10]
\label{Q_closed}
Let $\mathcal{L}$ be a $\sigma$-complete orthomodular lattice and let $\phi$ be a faithful $\sigma$-additive probability measure on $\mathcal{L}$. If there is at most one $\phi$-atom $Q$ in $\cL$, and $\phi$ is $Q$-decomposable, then $(\mathcal{L}, \phi)$ is common cause closed.
\end{proposition}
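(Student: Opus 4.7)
The plan is to construct a common cause $C$ as a subelement of $A \wedge B$ with a precisely chosen measure, using the purely nonatomic part of the $Q$-decomposition to realize that measure as an element of $\cL$. The first observation is that any $C$ with $C \leq A \wedge B$ is automatically compatible with both $A$ and $B$ by orthomodularity, and satisfies $A \wedge C = B \wedge C = A \wedge B \wedge C = C$. Under this restriction, condition (\ref{occ1}) becomes the trivial identity $1 = 1 \cdot 1$, and (\ref{occ2}) reduces to a single algebraic equation in $\phi(C)$ determined by $\phi(A), \phi(B), \phi(A \wedge B)$. Solving this equation yields a target value $c^{\ast} \in (0, \phi(A \wedge B))$ for $\phi(C)$: positivity uses the correlation hypothesis $\phi(A \wedge B) > \phi(A) \phi(B)$, and the upper bound uses the elementary inequality $(\phi(A) - \phi(A \wedge B))(\phi(B) - \phi(A \wedge B)) \geq 0$. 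Conditions (\ref{occ3}) and (\ref{occ4}) then reduce to $\phi(A) < 1$ and $\phi(B) < 1$, which already follow from the very existence of a positive correlation.

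The remaining step is existence: to find $C \leq A \wedge B$ in $\cL$ with $\phi(C) = c^{\ast}$. Here the $Q$-decomposability hypothesis enters. Writing $\phi = \alpha \phi_1 + (1 - \alpha)\phi_2$ with $\phi_2$ purely nonatomic and $\sigma$-additive, I would prove a Sierpi\'{n}ski-style intermediate-value lemma: for every $X \in \cL$ and every $r \in [0, \phi_2(X)]$ there exists $Y \leq X$ with $\phi_2(Y) = r$. The argument is the standard transfinite exhaustion combining pure nonatomicity with $\sigma$-additivity (at each stage either a subelement of the target measure is found or else one uses $\sigma$-additivity to assemble a countable chain of near-misses). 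One then applies the lemma to $X = A \wedge B$; if the unique $\phi$-atom $Q$ happens to lie below $A \wedge B$, one instead applies it to $X = (A \wedge B) \wedge Q^{\perp}$ and adjusts by the jump contributed by $\phi_1$ at $Q$ in order to hit $c^{\ast}$ exactly.

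The main obstacle is the nontriviality clause $C \neq A, B$. The target value $c^{\ast}$ can coincide with $\phi(A)$ or $\phi(B)$ precisely in the boundary situations where $A \leq B$ or $B \leq A$ (so that $A \wedge B$ equals one of the correlating elements by faithfulness), and the construction above would then merely produce $C = A$ or $C = B$. In these edge cases I would perturb the target slightly away from $c^{\ast}$ and re-verify the Reichenbach conditions directly, using the intermediate-value lemma to realize the perturbed measure. A similarly mild case analysis is needed when $A \vee B = I$, where the denominator in $c^{\ast}$ vanishes and the natural remedy is to seek a common cause above $A \vee B$ by a symmetric construction. The at-most-one-atom hypothesis is what guarantees that enough nonatomic ``room'' is available in each of these regions for the construction to succeed; the bound $0 < \phi(C) < 1$ is then automatic from $0 < c^{\ast} < 1$ and faithfulness.
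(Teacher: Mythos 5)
First, a point of comparison: the paper itself contains no proof of Proposition \ref{Q_closed} --- it is imported verbatim from \cite{GyenisZ-Redei2013Erkenntnis} --- so your sketch has to be judged on its own terms. Its first half is sound: for $C \leq A \wedge B$, conditions (\ref{occ1}), (\ref{occ3}), (\ref{occ4}) do reduce to a triviality and to $\phi(A)<1$, $\phi(B)<1$ (which positive correlation forces), and (\ref{occ2}) pins down $\phi(C) = c^{\ast} = \bigl(\phi(A\wedge B)-\phi(A)\phi(B)\bigr)/\phi(A^{\perp}\wedge B^{\perp})$, with $0<c^{\ast}\leq\phi(A\wedge B)$ by the correlation hypothesis and the elementary inequality you quote. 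Incidentally, the case $\phi(A\vee B)=1$ that you set aside for a ``symmetric construction'' is vacuous: writing $p,q,r,s$ for the probabilities of the four cells $A\wedge B$, $A\wedge B^{\perp}$, $A^{\perp}\wedge B$, $A^{\perp}\wedge B^{\perp}$, positive correlation reads $ps>qr$, which forces $s=\phi(A^{\perp}\wedge B^{\perp})>0$.

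The genuine gap is in the realization step and in both of your proposed repairs. Since (\ref{occ2}) is an equality whose unique solution under $C\leq A\wedge B$ is $c^{\ast}$, there is no freedom to ``perturb the target slightly'': any other value of $\phi(C)$ violates (\ref{occ2}). Worse, when the unique atom $Q$ lies below $A\wedge B$ the attainable values of $\phi(C)$ for $C\leq A\wedge B$ are confined to $[0,\phi(A\wedge B)-\phi(Q)]\cup[\phi(Q),\phi(A\wedge B)]$, and $c^{\ast}$ can land in the gap. Concretely, in a measure algebra with a single atom $Q\leq A\wedge B$, $\phi(Q)=0.45$ and $(p,q,r,s)=(0.5,0.2,0.2,0.1)$, one has $ps=0.05>0.04=qr$ (so $A,B$ are correlated and logically independent), yet $c^{\ast}=(ps-qr)/s=0.1$ lies in the forbidden interval $(0.05,0.45)$: \emph{no} $C\leq A\wedge B$ is a common cause, and no adjustment ``by the jump contributed by $\phi_1$ at $Q$'' can hit $c^{\ast}$. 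What rescues the proposition is a second construction absent from your plan: when $Q\leq A\wedge B$ one takes $C\geq A\vee B$ with $\phi(C)=\phi(A)\phi(B)/\phi(A\wedge B)$, i.e.\ one adjoins to $A\vee B$ a piece of $A^{\perp}\wedge B^{\perp}$ of measure $qr/p<s$; this is realizable by your intermediate-value lemma precisely because the single atom, sitting below $A\wedge B$, cannot also sit below $A^{\perp}\wedge B^{\perp}$. The case distinction according to which cell contains $Q$ is where the one-atom hypothesis actually does its work. Finally, in the boundary case $A\leq B$ the problem is not merely that your construction returns $C=A$: if $B\wedge A^{\perp}$ is itself the atom $Q$, every common cause $C$ must satisfy $A\leq C\leq B$, so no nontrivial common cause exists at all; such degenerate pairs have to be excluded by a logical-independence requirement on the correlated pair (as in \cite{GyenisZ-Redei2011}, \cite{GyenisZ-Redei2013Erkenntnis}) --- they cannot be handled by perturbing $c^{\ast}$.
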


$Q$-decomposability is a somewhat artificial and non-transparent condition however. Therefore it is desirable to investigate whether one can remove or replace it by another, more natural condition. It is not known if this can be done in general probability theories. We claim however that it can be both in classical and in quantum probability theories. First we examine the classical case. We prove the following lemma to remove the condition of $Q$-decomposability in Proposition \ref{Q_closed}.

\begin{lemma}
\label{one_atom_Boolean}
Let $\mathcal{L}$ be a $\sigma$-complete Boolean algebra and let $\phi$ be a faithful $\sigma$-additive probability measure on $\mathcal{L}$. If $\mathcal{L}$ has at most one $\phi$-atom $Q$, then $\phi$ is $Q$-decomposable.
\end{lemma}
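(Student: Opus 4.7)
\emph{Plan.} The plan is to write down the natural candidate decomposition explicitly: set $\alpha := \phi(Q)$, define $\phi_1$ by $\phi_1(A) = 1$ if $Q \leq A$ and $\phi_1(A) = 0$ otherwise, and let $\phi_2 := (\phi - \alpha \phi_1)/(1 - \alpha)$ (assuming $\alpha < 1$). The lemma then reduces to three verifications: that $\phi_1$ is a $\sigma$-additive probability measure, that $\phi_2$ is a well-defined $\sigma$-additive probability measure, and that $\phi_2$ is purely nonatomic. The degenerate cases (no $\phi$-atom in $\mathcal{L}$, or $\alpha = 1$) will be handled separately.

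\emph{First steps.} For $\phi_1$, the key observation is that in the Boolean algebra $\mathcal{L}$ with $Q$ an atom, one has $Q \wedge A \in \{0, Q\}$ for every $A$; equivalently, exactly one of $Q \leq A$ or $Q \leq A^\perp$ holds. Hence $\phi_1$ is two-valued, and for any pairwise orthogonal family at most one member dominates $Q$ (else $Q \leq A_i \wedge A_j = 0$, contradicting $\phi(Q) > 0$), which delivers $\sigma$-additivity. For $\phi_2$, positivity reduces to monotonicity: $Q \leq A$ implies $\phi(A) \geq \phi(Q) = \alpha$. The upper bound $\phi_2(A) \leq 1$ follows from the dual fact $Q \leq A^\perp \Rightarrow \phi(A) \leq 1 - \alpha$, and normalization and $\sigma$-additivity are immediate by linearity.

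\emph{The main step.} The crux is to show $\phi_2$ is purely nonatomic. Given $A$ with $\phi_2(A) > 0$, I would pass to $A' := A \wedge Q^\perp$. Using $A = (A \wedge Q) \vee A'$ with $A \wedge Q \in \{0, Q\}$, a direct calculation gives $\phi_2(A) = \phi(A')/(1 - \alpha) = \phi_2(A')$; in particular $\phi(A') > 0$. Now $A' \leq Q^\perp$ forces $A' \neq Q$, and since $Q$ is the \emph{only} $\phi$-atom, $A'$ is not itself a $\phi$-atom. Hence some $B \leq A'$ satisfies $0 < \phi(B) < \phi(A')$. Since $B \leq A' \leq Q^\perp$, we also have $Q \not\leq B$, so $\phi_2(B) = \phi(B)/(1 - \alpha)$, giving $0 < \phi_2(B) < \phi_2(A)$ and $B < A$, as required.

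\emph{Main obstacle and edge cases.} The substantive step is the nonatomicity of $\phi_2$; the delicate subcase is $Q \leq A$, in which one must subtract off the atomic mass via $A'$ and invoke uniqueness of $Q$ to locate the witness $B$. As for degeneracies: if $\mathcal{L}$ has no $\phi$-atom, then $\phi$ is already purely nonatomic (any $A$ with $\phi(A) > 0$ fails to be a $\phi$-atom, so a suitable $B < A$ exists), and $\alpha = 0$ works. If $\alpha = \phi(Q) = 1$, faithfulness forces $Q^\perp = 0$, so $Q = I$ and $\mathcal{L} = \{0, I\}$; here $\phi = \phi_1$ and the $\phi_2$ term contributes nothing.
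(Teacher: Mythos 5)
Your proposal is correct and follows essentially the same route as the paper's proof: the same decomposition $\phi=\alpha\phi_1+(1-\alpha)\phi_2$ with $\alpha=\phi(Q)$ and the two-valued $\phi_1$ (the paper writes $\phi_1'(A)=\phi(A\wedge Q)$ and $\phi_2'(A)=\phi(A\wedge(A\wedge Q)^{\perp})$ before normalizing, which agrees with yours), the same case split for $\sigma$-additivity according to whether some $A_j$ dominates $Q$, and the same nonatomicity argument via $A\wedge Q^{\perp}$ and uniqueness of the atom. Your write-up is a slightly streamlined version of the paper's, with no substantive difference.
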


\begin{proof}
If there is no $\phi$-atom in $\cL$, then $\phi$ is purely nonatomic. Thus $\phi$ is $Q$-decomposable.

Suppose that $\mathcal{L}$ has one single $\phi$-atom $Q$. For any $A \in \mathcal{L}$, let the maps $\phi_{1}'$ and $\phi_{2}'$ be defined by
\begin{eqnarray}
\phi_{1}'(A) &:=& \phi(A \wedge Q) \\
\phi_{2}'(A) &:=& \phi(A \wedge (A \wedge Q)^{\perp})
\end{eqnarray}
Then $\phi=\phi_{1}'+\phi_{2}'$ since $A=(A \wedge Q) \vee (A \wedge (A \wedge Q)^{\perp})$. If $\phi_{1}'(I)=1$, then $Q=I$. In this case, $\phi$ is $Q$-decomposable. We, thus, assume that $\phi_{1}'(I)<1$. Then $\phi_{2}'(I)>0$. Define $\phi_1$ and $\phi_2$ as follows:
\begin{eqnarray}
\phi_{1}&:=&(1/\phi_{1}'(I))\phi_{1}'\\
\phi_{2}&:=&(1/\phi_{2}'(I))\phi_{2}'
\end{eqnarray}
and let $\alpha := \phi_{1}'(I)$. Then $\phi=\alpha \phi_1 + (1-\alpha) \phi_2$ and $\phi_1(I)=\phi_2(I)=1$.

For any $A$ such that $Q \leq A$, we have $\phi_{1}(A)=1$. For any $A$ such that $Q \not\leq A$, $A \wedge Q=0$ since $A \wedge Q < Q$ and $Q$ is an atom. Thus $\phi_{1}(A)=0$ for any $A$ such that $Q \not\leq A$.

Next we show that $\phi_1$ and $\phi_2$ are $\sigma$-additive probability measures.

Let $\{ A_i | i \in \mathbb{N} \}$ be a countable set of mutually orthogonal elements in $\mathcal{L}$. We distinguish two cases:
\begin{itemize}
\item[(i)] Suppose that $A_i \wedge Q = 0$ for any $i \in \mathbb{N}$. \\
Because $\mathcal{L}$ is a Boolean algebra, for any $i \in \mathbb{N}$ we have.
\begin{equation}
Q=(A_i \wedge Q) \vee (A_i^{\perp} \wedge Q)=A_i^{\perp} \wedge Q
\end{equation}
This implies $A_i \leq Q^{\perp}$ for any $i \in \mathbb{N}$. Thus, $\vee_{i \in \mathbb{N}} A_i \leq Q^{\perp}$. Therefore
\begin{equation}
(\vee_{i \in \mathbb{N}} A_i) \wedge Q =0
\end{equation}
Hence
\begin{equation}
\phi_1(\vee_{i \in \mathbb{N}} A_i)=\sum_{i \in \mathbb{N}} \phi_1(A_i)=0
\end{equation}
and
\begin{eqnarray}
\phi_2(\vee_{i \in \mathbb{N}} A_i)&=&(1/(1-\alpha))(\phi(\vee_{i \in \mathbb{N}} A_i)-\alpha \phi_1(\vee_{i \in \mathbb{N}} A_i))\\
&=&(1/(1-\alpha))\phi(\vee_{i \in \mathbb{N}} A_i)\\
&=&(1/(1-\alpha))\sum_{i \in \mathbb{N}}\phi(A_i)\\
&=&\sum_{i \in \mathbb{N}}((1/(1-\alpha))\phi(A_i)-\alpha \phi_1(A_i))\\
&=&\sum_{i \in \mathbb{N}}\phi_2(A_i)
\end{eqnarray}

\item[(ii)] Suppose that there is $A_j$ such that $A_j \wedge Q \neq 0$. \\
Since $0 < A_j \wedge Q \leq Q$ and $Q$ is a $\phi$-atom, we have $Q=A_j \wedge Q$. Therefore
\begin{equation}
\phi_1(A_j)=\phi(A_j \wedge Q)/\phi(Q)=1
\end{equation}
and
\begin{eqnarray}
1 \leq \phi_1(A_j)&=&\phi(A_j \wedge Q)/\phi(Q) \leq \phi((\vee_{i \in \mathbb{N}} A_i) \wedge Q)/\phi(Q)\\
&=&\phi_1(\vee_{i \in \mathbb{N}} A_i) \leq 1
\end{eqnarray}
Moreover, $A_i^{\perp} \geq A_j \geq Q$ for any $i \in \mathbb{N}$ such that $i \neq j$, so that $A_i \wedge Q = 0$. Thus $\phi_1(A_i)=0$ for any $i \in \mathbb{N}$ such that $i \neq j$. Hence
\begin{equation}
\phi_1(\vee_{i \in \mathbb{N}} A_i)=\sum_{i \in \mathbb{N}} \phi_1(A_i)=1
\end{equation}
and
\begin{eqnarray}
\phi_2(\vee_{i \in \mathbb{N}} A_i)&=&(1/(1-\alpha))(\phi(\vee_{i \in \mathbb{N}} A_i)-\alpha \phi_1(\vee_{i \in \mathbb{N}} A_i))\\
&=&(1/(1-\alpha))(\phi(\vee_{i \in \mathbb{N}} A_i)-\alpha)\\
&=&(1/(1-\alpha))(\phi(A_j)-\alpha)+(1/(1-\alpha))\sum_{i \neq j}\phi(A_i)\\
&=&(1/(1-\alpha))(\phi(A_j)-\alpha \phi_1(A_j))+(1/(1-\alpha))\\
&{}&\times \sum_{i \neq j}(\phi(A_i)-\alpha \phi_1(A_i))\\
&=&\phi_2(A_j)+\sum_{i \neq j}\phi_2(A_i)\\
&=&\sum_{i \in \mathbb{N}}\phi_2(A_i)
\end{eqnarray}
\end{itemize}
Finally we show that $\phi_2$ is purely nonatomic. Let $A$ be an element in $\mathcal{L}$ such that $\phi_2(A)>0$. Then $A \wedge (A \wedge Q)^{\perp} > 0$ since $\phi$ is faithful. Suppose that $Q \leq A \wedge (A \wedge Q)^{\perp}$. Then $Q \leq A$ and $Q \leq (A \wedge Q)^{\perp}$, which implies $Q=A \wedge Q \leq Q^{\perp}$. This is a contradiction. Thus $Q \not\leq A \wedge (A \wedge Q)^{\perp}$; that is we have
\begin{equation}
\label{Q_atom}
Q \wedge (A \wedge (A \wedge Q)^{\perp})=0
\end{equation}
Because $Q$ is the only $\phi$-atom by assumption, $A \wedge (A \wedge Q)^{\perp}$ is not a $\phi$-atom. So there is an element $X \in \mathcal{L}$ such that
\begin{equation}
0<X<A \wedge (A \wedge Q)^{\perp}
\end{equation}
By Equation (\ref{Q_atom}) we have $X \wedge Q=0$. Thus
\begin{equation}
X=X \wedge (X \wedge Q)^{\perp}
\end{equation}
Hence
\begin{equation}
0 < \phi(X)=\phi(X \wedge (X \wedge Q)^{\perp})=\phi_2'(X)
\end{equation}
and
\begin{equation}
\phi_2'(X)=\phi(X)<\phi(A \wedge (A \wedge Q)^{\perp})=\phi_2'(A)
\end{equation}
This shows that $\phi_2$ is purely nonatomic.
\end{proof}

By Proposition \ref{Q_closed} and Lemma \ref{one_atom_Boolean}, we get the following proposition as a corollary.

\begin{proposition}
\label{closed_Boolean}
Let $\mathcal{L}$ be a $\sigma$-complete Boolean algebra and let $\phi$ be a faithful $\sigma$-additive probability measure on $\mathcal{L}$. If $\mathcal{L}$ has at most one $\phi$-atom, then $(\mathcal{L}, \phi)$ is common cause closed.
\end{proposition}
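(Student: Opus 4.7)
The statement is advertised as a corollary, so my plan is to obtain it by a short chaining of the two results immediately preceding it rather than by any new construction. The two ingredients are Proposition \ref{Q_closed}, which gives common cause closedness whenever $(\mathcal{L},\phi)$ has at most one $\phi$-atom $Q$ \emph{and} $\phi$ is $Q$-decomposable, and Lemma \ref{one_atom_Boolean}, which shows that in the Boolean $\sigma$-complete setting the $Q$-decomposability hypothesis is automatic once the at-most-one-atom assumption is in place. The whole point of Lemma \ref{one_atom_Boolean} was to bridge exactly the gap that separates the proposition from Proposition \ref{Q_closed}.

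So the plan is as follows. First, dispose of the case in which $\mathcal{L}$ has no $\phi$-atom: then $\phi$ is purely nonatomic, so Lemma \ref{one_atom_Boolean} (or a direct check) gives $Q$-decomposability trivially, and Proposition \ref{Q_closed} applies. Second, in the case where there is a unique $\phi$-atom $Q$, invoke Lemma \ref{one_atom_Boolean} to conclude that $\phi$ is $Q$-decomposable, and then invoke Proposition \ref{Q_closed} with this same $Q$ to obtain that $(\mathcal{L},\phi)$ is common cause closed.

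There is essentially no obstacle; all of the actual work has already been packaged into Lemma \ref{one_atom_Boolean}. The only thing to be mildly careful about is that Proposition \ref{Q_closed} and Lemma \ref{one_atom_Boolean} are formulated for \emph{the same} element $Q$, so one must note that the $Q$ whose existence (or absence) is posited in the hypothesis is the same $Q$ that serves as the point mass in the decomposition $\phi=\alpha\phi_{1}+(1-\alpha)\phi_{2}$; since both results use $Q$ in the identical sense of a $\phi$-atom of $\mathcal{L}$, this compatibility is automatic. Hence the proof reduces to the single line "combine Lemma \ref{one_atom_Boolean} with Proposition \ref{Q_closed}", which is what the text already announces.
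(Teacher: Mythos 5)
Your proposal is correct and is exactly the paper's own argument: the proposition is stated as an immediate corollary of Proposition \ref{Q_closed} combined with Lemma \ref{one_atom_Boolean}, with the no-atom case handled by pure nonatomicity just as you describe. Nothing further is needed.
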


Next we consider the case of quantum probability spaces. We proceed along the logic followed in the classical case: we prove a lemma that is analogous to Lemma \ref{one_atom_Boolean} showing that $Q$-decomposability is entailed by the feature that $Q$ is the only measure theoretic atom:
\begin{lemma}
\label{one_atom_algebra}
Let $\mathcal{L}$ be the orthomodular lattice of projections of a von Neumann algebra $\mathfrak{N}$, and let $\phi$ be a faithful normal state on $\mathfrak{N}$. If $\mathcal{L}$ has at most one $\phi$-atom $Q$, then $\phi$ is $Q$-decomposable.
\end{lemma}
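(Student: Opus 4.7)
The strategy is to reduce to the Boolean case by first showing that the unique $\phi$-atom $Q$ is automatically a central projection of $\mathfrak{N}$, and then repeating the construction of Lemma \ref{one_atom_Boolean}.

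The main structural step --- which I expect to be the principal obstacle --- is establishing that $Q$ is central. By Lemma \ref{faithful}, $Q$ is a minimal projection of $\mathfrak{N}$. I would consider the central carrier $z := c(Q)$ and first argue that $z$ is a \emph{minimal} central projection: for any central projection $0 < z' \leq z$, the projection $z'Q \leq Q$ is by minimality either $0$ or $Q$; the former possibility contradicts the defining property of the central carrier, so $Q \leq z'$ and hence $z' = z$. Consequently $z\mathfrak{N}$ is a factor containing the minimal projection $Q$, so it is a type I factor isomorphic to some $B(H)$. Every rank-one projection of $B(H)$ would furnish a minimal projection of $\mathfrak{N}$ distinct from $Q$, so uniqueness of $Q$ forces $\dim H = 1$ and $z = Q$. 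Without this centrality the natural candidate $A \mapsto \phi(A \wedge Q)$ would generally fail to be $\sigma$-additive, because the map $A \mapsto A \wedge Q$ does not preserve countable orthogonal joins when $Q$ is not central.

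Granted centrality, every $A \in \mathcal{L}$ decomposes orthogonally as $A = (A \wedge Q) \vee (A \wedge Q^{\perp})$, and the Boolean construction transfers. I would set $\alpha := \phi(Q)$, $\phi_1(A) := \phi(A \wedge Q)/\alpha$, and $\phi_2(A) := \phi(A \wedge Q^{\perp})/(1-\alpha)$ when $Q < I$, treating the trivial case $Q = I$ (where $\mathcal{L} = \{0,I\}$) separately. Because $Q$ is an atom, $A \wedge Q \in \{0, Q\}$, so $\phi_1$ has the required indicator form. The $\sigma$-additivity of both $\phi_i$ reduces to the fact that multiplication by a central projection is a normal $*$-homomorphism on $\mathfrak{N}$, so it preserves countable orthogonal joins of projections.

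To confirm that $\phi_2$ is purely nonatomic, I would take $A$ with $\phi_2(A) > 0$ and set $B := A \wedge Q^{\perp}$; then $B > 0$ by faithfulness, and $B \neq Q$ since $B \leq Q^{\perp}$. As $Q$ is the sole $\phi$-atom, $B$ is not a $\phi$-atom, so some $C \leq B$ satisfies $0 < \phi(C) < \phi(B)$; this $C$ lies strictly below $A$ (otherwise $\phi(C) = \phi(A) = \phi(B)$, a contradiction) and gives $0 < \phi_2(C) < \phi_2(A)$. The case in which $\mathcal{L}$ has no $\phi$-atom at all is immediate: $\phi$ itself is purely nonatomic, and one sets $\alpha = 0$.
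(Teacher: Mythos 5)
Your proof is correct, and it ends at the same decomposition as the paper's --- $\phi(A)=\phi(AQ)+\phi(AQ^{\perp})$ with $Q$ central, $\phi_1$ the normalized compression to $Q$ and $\phi_2$ to $Q^{\perp}$ --- but it reaches the key fact that $Q$ is central by a different route. The paper starts from the global type decomposition $I=P_I+P_n$ (Theorem 6.5.2 of Kadison--Ringrose), argues that the type I summand $\mathfrak{N}P_I$ can contain only the single nonzero projection $Q$ (otherwise there would be two atoms), and concludes $P_I=Q$; you instead argue locally at $Q$: the central carrier $c(Q)$ is a minimal central projection, so $\mathfrak{N}c(Q)$ is a factor with a minimal projection, hence isomorphic to some $B(H)$, and uniqueness of the atom forces $\dim H=1$, i.e.\ $Q=c(Q)$. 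The two arguments draw on the same stock of type I theory, but yours is more self-contained and buys two small advantages: it handles the atomless case directly from the definition of pure nonatomicity rather than via the paper's claim that the absence of $\phi$-atoms forces $P_I=0$ (a claim that requires care, since type I algebras such as $L^{\infty}[0,1]$ have no minimal projections), and it spells out the verification that $\phi_2$ is purely nonatomic, which the paper compresses into a single assertion. Your centrality step, your $\sigma$-additivity argument via normality of multiplication by a central projection, and your nonatomicity argument for $\phi_2$ all check out; the only loose end, shared with the paper, is the degenerate case $Q=I$, where the definition of $Q$-decomposability has to be read as not requiring a purely nonatomic $\phi_2$ when $\alpha=1$.
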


\begin{proof}
Let $\mathfrak{N}$ be a von Neumann algebra, and let $\mathfrak{C}(\mathfrak{N})$ be the center of $\mathfrak{N}$. There are projections $P_{I},P_{n} \in \mathfrak{C}(\mathfrak{N})$ such that $P_{I}+P_{n}=I$, $\mathfrak{N}P_{I}$ is of type I and $\mathfrak{N}P_{n}$ is not of type I (Theorem 6.5.2 in \cite{Kadison-Ringrose1986}).

If there is no $\phi$-atom in $\cL$, then $P_I=0$. In this case, $\phi$ is purely nonatomic. Thus, $\phi$ is $Q$-decomposable. If $P_I=I$, then $P_I=Q$. Thus $\phi$ is $Q$-decomposable.

Suppose that $0<P_I<I$. Define $\phi_1$ and $\phi_2$  by
\begin{eqnarray}
\phi_1(A)&:=&\phi(AP_I)/\phi(P_I)\\
\phi_2(A)&:=&\phi(AP_n)/\phi(P_n)
\end{eqnarray}
and let $\alpha := \phi(P_I)$. Then $\phi=\alpha \phi_1 + (1- \alpha) \phi_2$.

Suppose that $\mathfrak{N}P_I$ contains two projections. Then it contains two $\phi$-atoms because $\mathfrak{N}P_I$ is of type I. This is a contradiction. Thus, $\mathfrak{N}P_I$ contains only one projection. Therefore, $P_I=Q$.

For any projection $A \in \mathfrak{N}$ such that $Q \leq A$, we have
\begin{equation}
\phi_1(A)=\phi(AP_I)/\phi(P_I)=\phi(AQ)/\phi(Q)=1
\end{equation}
For any projection $A \in \mathfrak{N}$ such that $Q \not\leq A$, $A$ is in $\mathfrak{N}P_n$, that is, $A \leq P_n$. Thus, $\phi_1(A)=\phi(AP_I)/\phi(P_I)=0$. $\mathfrak{N}P_n$ does not have any $\phi$-atom, $\phi_2$ is purely nonatomic.
\end{proof}
By Proposition \ref{Q_closed} and Lemma \ref{one_atom_algebra}, we get the following proposition as a corollary.
\begin{proposition}
\label{closed_algebra}
Let $\mathcal{L}$ be an orthomodular lattice of projections of a von Neumann algebra $\mathfrak{N}$ and let $\phi$ be a faithful $\sigma$-additive probability measure on $\mathcal{L}$. If $\mathcal{L}$ has at most one $\phi$-atom, then $(\mathcal{L}, \phi)$ is common cause closed.
\end{proposition}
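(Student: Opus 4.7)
The plan is to obtain this statement as an immediate corollary of the two preceding results, in complete parallel with how Proposition \ref{closed_Boolean} was derived from Lemma \ref{one_atom_Boolean} together with Proposition \ref{Q_closed}. The whole content of the proof is to check that the hypotheses of Proposition \ref{Q_closed} are met, since the conclusion we want is the conclusion of that proposition verbatim.

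First I would verify the structural assumptions. The projection lattice $\mathcal{L}$ of a von Neumann algebra $\mathfrak{N}$ is a complete orthomodular lattice, hence in particular a $\sigma$-complete orthomodular lattice, so the ambient framework required by Proposition \ref{Q_closed} is in place. The measure $\phi$ is faithful and $\sigma$-additive by hypothesis, and on a von Neumann lattice a faithful $\sigma$-additive probability measure is precisely (the restriction to projections of) a faithful normal state on $\mathfrak{N}$, which is the form in which Lemma \ref{one_atom_algebra} is stated. This identification is the only bookkeeping step needed to bridge the two results.

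Next I would apply Lemma \ref{one_atom_algebra} to the hypothesis that $\mathcal{L}$ has at most one $\phi$-atom: this delivers a distinguished element $Q \in \mathcal{L}$ (either $Q=0$ in the purely nonatomic case, or $Q$ the unique $\phi$-atom) such that $\phi$ is $Q$-decomposable. With $Q$-decomposability established and at most one $\phi$-atom $Q$ present, the hypotheses of Proposition \ref{Q_closed} are satisfied, and its conclusion is exactly that $(\mathcal{L},\phi)$ is common cause closed.

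I do not foresee any genuine obstacle: all of the work has already been absorbed into Lemma \ref{one_atom_algebra} (which handled the central algebraic fact that a type I direct summand of $\mathfrak{N}$ with at most one atom must reduce to $Q$ itself, while the complementary summand $\mathfrak{N}P_n$ carries a purely nonatomic measure) and into Proposition \ref{Q_closed} (which supplies the common cause from $Q$-decomposability). The proof of Proposition \ref{closed_algebra} is therefore just the concatenation of these two invocations.
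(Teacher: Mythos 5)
Your proposal is correct and is exactly how the paper proves this proposition: it is stated as an immediate corollary of Lemma \ref{one_atom_algebra} (which supplies $Q$-decomposability from the at-most-one-atom hypothesis) combined with Proposition \ref{Q_closed}. Your additional remark identifying faithful $\sigma$-additive probability measures on the projection lattice with faithful normal states, needed to match the hypotheses of Lemma \ref{one_atom_algebra}, is a bookkeeping step the paper leaves implicit.
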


Propositions \ref{closed_Boolean} and \ref{closed_algebra} mean that the condition that $\mathcal{L}$ has at most one $\phi$-atom is a sufficient condition for common cause closedness in the category of von Neumann algebras.

\section{A necessary condition for common cause closedness\label{sec:necessary}}
Next we consider the necessary condition for common cause closedness. Gyenis and R\'{e}dei \cite{GyenisZ-Redei2013Erkenntnis} showed that an orthomodular lattice which has two distinct $\phi$-atoms is not common cause closed under an additional condition:

\begin{proposition}[\cite{GyenisZ-Redei2013Erkenntnis} Proposition 3.8]
\label{Gyenis_not_common_cause_closed}
Let $\mathcal{L}$ be a $\sigma$-complete orthomodular lattice and let $\phi$ be a faithful $\sigma$-additive probability measure on $\mathcal{L}$. If $\mathcal{L}$ contains two distinct $\phi$-atoms $P$ and $Q$ such that $\phi(P \vee Q) < 1$, then $(\mathcal{L}, \phi)$ is not common cause closed.
\end{proposition}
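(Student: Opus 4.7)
The plan is to exhibit an explicit correlation in $(\mathcal{L},\phi)$ that admits no nontrivial common cause. The candidate I would try is $A:=P$ and $B:=P\vee Q$. Since $A\leq B$, the two are automatically compatible; and faithfulness gives $\phi(P)>0$, so the inequality
\begin{equation*}
\phi(A\wedge B) \;=\; \phi(P) \;>\; \phi(P)\,\phi(P\vee Q) \;=\; \phi(A)\phi(B)
\end{equation*}
is equivalent to $\phi(P\vee Q)<1$, which is our hypothesis. Hence $A$ and $B$ are positively correlated.

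Now let $C$ be any common cause of this correlation. Since $P$ is an atom, $P\wedge C\in\{0,P\}$. The case $P\wedge C=0$ is excluded by (\ref{occ3}), whose right-hand side would then have to be negative. Hence $P\leq C$, and since $\phi(A\wedge C)=\phi(P)>0$ I can cancel this positive factor in the screening equation (\ref{occ1}) to obtain $\phi((P\vee Q)\wedge C)=\phi(C)$. Combined with compatibility of $C$ with $B=P\vee Q$, orthomodularity (giving $\phi(C)=\phi(C\wedge B)+\phi(C\wedge B^\perp)$) and faithfulness, this forces $C\leq P\vee Q$. Writing $C=P\vee D$ with $D:=C\wedge P^\perp$ (possible by orthomodularity since $P\leq C$), the remaining task is to rule out a $D$ for which $C$ is strictly between $P$ and $P\vee Q$: either $D=0$ (giving $C=A$, trivial) or $P\vee D=P\vee Q$ (giving $C=B$, trivial).

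The main obstacle I expect is precisely this last structural step --- showing that $[P,P\vee Q]$ admits no intermediate element compatible with $P$. In a Boolean algebra this is immediate: by distributivity and atomicity of $Q$ the interval is just the two-element chain $\{P,P\vee Q\}$. In the projection lattice of a von Neumann algebra the same conclusion follows from a rank/dimension count on $P\vee Q$. In a fully general orthomodular $\sigma$-lattice one must exploit atomicity of $Q$ (so that $Q\wedge C\in\{0,Q\}$) together with orthomodularity and faithful $\sigma$-additivity of $\phi$ more delicately to force any putative intermediate $C$ to coincide with $P$ or with $P\vee Q$; this is where I expect the real work to live.
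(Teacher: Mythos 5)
First, a point of comparison: the paper does not actually prove this proposition --- it imports it verbatim from \cite{GyenisZ-Redei2013Erkenntnis} --- so there is no in-paper proof to measure you against; your attempt has to stand on its own. Up to the point where you stop, it does: $A:=P$, $B:=P\vee Q$ are distinct, compatible, and correlated precisely because $\phi(P)>0$ (faithfulness) and $\phi(P\vee Q)<1$; the exclusion of $P\wedge C=0$ via (\ref{occ3}), and the derivation of $C\le P\vee Q$ from (\ref{occ1}), compatibility of $C$ with $B$, additivity and faithfulness, are all correct. The reduction to ``the interval $[P,P\vee Q]$ contains nothing but its endpoints'' is exactly right as a reduction.

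But the step you defer is not merely ``the real work'' --- it is false in the stated generality, and when it fails your chosen correlation genuinely \emph{has} a nontrivial common cause, so no amount of delicate exploitation of orthomodularity can rescue this route. What you need is the atomic covering property ($P\vee Q$ covers $P$ when $Q$ is an atom with $P\wedge Q=0$), which general orthomodular lattices do not satisfy. Concretely: let $\mathcal{L}_0$ be the $0$--$1$ pasting (horizontal sum) of two copies of the Boolean algebra $2^{\{1,2,3\}}$, with atoms $a_1,a_2,a_3$ and $b_1,b_2,b_3$, let $\mathcal{L}=\mathcal{L}_0\times\{0,I_2\}$, and let $\phi$ be any faithful state. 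Put $P=(a_1,0)$ and $Q=(b_1,0)$; then $P\vee Q=(I_0,0)$ (the two atoms have no common upper bound in $\mathcal{L}_0$ except $I_0$), so $\phi(P\vee Q)<1$, yet $C=(a_1\vee a_2,0)$ satisfies $P<C<P\vee Q$. Every such intermediate $C$ with $0<\phi(C)<1$ is automatically compatible with $A$ and $B$ (being comparable to both) and satisfies (\ref{occ1})--(\ref{occ4}): (\ref{occ1}) reduces to $\phi(B\wedge C)=\phi(C)$, which holds since $C\le B$; (\ref{occ2}) degenerates to $0=0$ because $A\wedge C^{\perp}=0$; (\ref{occ3}) is $\phi(P)/\phi(C)>0$; and (\ref{occ4}) follows from $\phi(B)<1$. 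So your witness correlation is explained away, and the non-closedness of this $\mathcal{L}$ (which the proposition guarantees) must be exhibited by a different pair --- e.g.\ the correlation between $(a_1,0)$ and $(a_1\vee a_2,0)$ has no common cause. Hence for general $\sigma$-complete orthomodular lattices your argument needs either a different correlated pair or a genuinely different idea. It does close cleanly in the two cases this paper actually uses downstream: for Boolean algebras, distributivity gives $C=P\vee(C\wedge Q)$ with $C\wedge Q\in\{0,Q\}$; for von Neumann lattices, the Kaplansky formula gives $(P\vee Q)-P\sim Q-(P\wedge Q)=Q$, so $(P\vee Q)-P$ is again an atom and $C-P$ must be $0$ or all of it.
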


We show that if $(\cL,\phi)$ is either a classical or a quantum probability space, the assumption of condition $\phi(P \vee Q) < 1$ in Proposition \ref{Gyenis_not_common_cause_closed} can be replaced by the condition that there is a pair of correlated elements in $\cL$. This latter condition is more natural to assume because the aim of the Reichenbachian common cause is to explain correlations.

First we examine the case of a classical probability space.

\begin{lemma}
\label{correlation_Boolean}
Let $\mathcal{L}$ be a Boolean algebra and let $\phi$ be a faithful probability measure on $\mathcal{L}$. Assume that $\cL$ contains two atoms $P$ and $Q$ (recall that atoms are also $\phi$-atoms by faithfulness of $\phi$). There are two distinct elements $A$ and $B$ in $\mathcal{L}$ such that $\phi(A \wedge B)>\phi(A)\phi(B)$ if and only if $\phi(P \vee Q)<1$.
\end{lemma}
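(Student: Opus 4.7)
The plan is to handle the two directions separately. For the ``if'' direction I will exhibit an explicit correlated pair built from the atoms and their complement; for the converse I will use distributivity and the atomic property to show that the hypothesis $\phi(P\vee Q)=1$ forces $\mathcal{L}$ to collapse to the four-element algebra $\{0,P,Q,I\}$, in which no positive correlation can live.

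For the forward direction, set $R:=(P\vee Q)^{\perp}$. Faithfulness combined with $\phi(P\vee Q)<1$ yields $R>0$, hence $\phi(R)>0$. Since $P$ and $Q$ are distinct atoms one has $P\wedge Q=0$, and clearly $P\wedge R=Q\wedge R=0$, so $P,Q,R$ are pairwise orthogonal with $P\vee Q\vee R=I$. I choose $A:=P$ and $B:=P\vee R$; distinctness follows from $R>0$. By absorption $A\wedge B=P$, so $\phi(A\wedge B)=\phi(P)$, while $\phi(A)\phi(B)=\phi(P)\bigl(\phi(P)+\phi(R)\bigr)=\phi(P)\bigl(1-\phi(Q)\bigr)$, and the strict inequality $\phi(Q)>0$ (again by faithfulness) gives $\phi(A)\phi(B)<\phi(A\wedge B)$, as required.

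For the converse, I argue by contrapositive: assume $\phi(P\vee Q)=1$ and show no positively correlated distinct pair exists. Faithfulness promotes this to $P\vee Q=I$. For any $X\in\mathcal{L}$, distributivity gives $X=X\wedge(P\vee Q)=(X\wedge P)\vee(X\wedge Q)$; since $P$ and $Q$ are atoms, $X\wedge P\in\{0,P\}$ and $X\wedge Q\in\{0,Q\}$, so $X\in\{0,P,Q,I\}$. Thus $\mathcal{L}$ is exactly this four-element Boolean algebra. A quick enumeration of unordered pairs of distinct elements settles it: a pair containing $0$ satisfies $\phi(A\wedge B)=0=\phi(A)\phi(B)$; a pair containing $I$ satisfies $\phi(A\wedge B)=\phi(A)\phi(B)$ because $\phi(I)=1$; and the remaining pair $(P,Q)$ gives $\phi(P\wedge Q)=0<\phi(P)\phi(Q)$, which is a negative correlation. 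Hence no pair is positively correlated.

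I do not anticipate a real obstacle here. The one small choice requiring thought is picking the correlated pair in the forward direction; the natural guess $A=P$ together with $B=Q^{\perp}=P\vee R$ works because the inclusion $P\le Q^{\perp}$ makes $A\wedge B=A$, which maximizes the left-hand side while keeping the right-hand side strictly less than $\phi(P)$ courtesy of the factor $1-\phi(Q)<1$.
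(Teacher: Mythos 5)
Your proof is correct, and the harder direction is argued quite differently from the paper. For the direction ``$\phi(P\vee Q)<1$ implies a correlation exists'' you take $A=P$, $B=Q^{\perp}=P\vee(P\vee Q)^{\perp}$, while the paper takes $A=P\vee Q$, $B=P$; both are one-line verifications and equally good. For the converse, the paper argues indirectly by decomposing $\phi=\phi_1+\phi_2$ with $\phi_1(X)=\phi((X\wedge P)\vee(X\wedge Q))$, showing $\phi_2\equiv 0$ so that $\phi(X)=\phi(X\wedge P)+\phi(X\wedge Q)$ for all $X$, and then running a several-branch case analysis on how the correlated pair $A,B$ sits relative to $P$ and $Q$ to reach a contradiction. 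You instead use faithfulness to upgrade $\phi(P\vee Q)=1$ to $P\vee Q=I$, and then the distributive identity $X=(X\wedge P)\vee(X\wedge Q)$ together with atomicity to conclude that $\mathcal{L}$ is literally the four-element algebra $\{0,P,Q,I\}$, after which an exhaustive check of the six unordered pairs rules out any positive correlation. This is shorter and gives a sharper structural conclusion (the lattice collapses entirely), at the cost of leaning on distributivity in a way that does not suggest how to attack the general orthomodular version (Problem \ref{prob:necessary}); the paper's measure-theoretic identity $\phi(X)=\phi(X\wedge P)+\phi(X\wedge Q)$ is closer in spirit to an argument one might hope to transplant to that setting. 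All the small points you need (distinct atoms are orthogonal, $\phi(P),\phi(Q)>0$ by faithfulness, distinctness of $A$ and $B$ from $(P\vee Q)^{\perp}>0$) are handled correctly.
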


\begin{proof}
First we prove that if $A$ and $B$ are elements in $\mathcal{L}$ such that $\phi(A \wedge B)>\phi(A)\phi(B)$ and $P$ and $Q$ are distinct $\phi$-atoms, then $\phi(P \vee Q)<1$. The proof is indirect: Suppose that $\phi(P \vee Q)=1$. We show that this leads to contradiction.

Define $\phi_1$ and $\phi_2$ by
\begin{eqnarray}
\phi_{1}(X)&:=&\phi((X \wedge P) \vee (X \wedge Q))\\
\phi_{2}(X)&:=&\phi(X \wedge \{(X \wedge P) \vee (X \wedge Q)\}^{\perp})
\end{eqnarray}
for any $X \in \mathcal{L}$. Then $\phi=\phi_{1}+\phi_{2}$ since
\begin{equation}
X=\{ (X \wedge P) \vee (X \wedge Q)\} \vee (X \wedge \{(X \wedge P) \vee (X \wedge Q)\}^{\perp}))
\end{equation}
For any $X \in \mathcal{L}$ we have
\[ \begin{split}
\phi_{1}(X \vee X^{\perp}) &=\phi((X \vee X^{\perp}) \wedge P) \vee ((X \vee X^{\perp}) \wedge Q)) \\
&=\phi((X \wedge P) \vee (X^{\perp} \wedge P) \vee (X \wedge Q) \vee (X^{\perp} \wedge  Q)) \\
&=\phi((X \wedge P) \vee (X \wedge Q))+ \phi((X^{\perp} \wedge P) \vee (X^{\perp} \wedge  Q)) \\
&=\phi_{1}(X)+\phi_{1}(X^{\perp})
\end{split} \]
Thus for any $X \in \mathcal{L}$ we have:
\begin{eqnarray}
\phi_{2}(I)&=&\phi_{2}(X \vee X^{\perp})\label{eq:help1}\\
&=&\phi(X \vee X^{\perp})-\phi_{1}(X \vee X^{\perp})\\
&=&\phi(X)-\phi_1(X)+\phi(X^{\perp})-\phi_1(X^{\perp})\\
&=&\phi_{2}(X)+\phi_{2}(X^{\perp}) \\
&\geq& \phi_2(X)\label{eq:help2}
\end{eqnarray}
Since $\phi(P \vee Q)=1$ by our indirect assumption, using the decomposition of $\phi$ into $\phi_1$ and $\phi_2$, and keeping in mind the definition of $\phi_1$ we have
\begin{eqnarray}
\phi_2(I)&=&\phi(I)-\phi_1(I)\\
&=&1-\phi(P\vagy Q)=0
\end{eqnarray}
This, by equations (\ref{eq:help1})-(\ref{eq:help2}), entails $\phi_{2}(X)=0$ for any $X \in \mathcal{L}$. It follows that
\begin{equation}
\phi(X)=\phi_1(X)=\phi((X \wedge P) \vee (X \wedge Q))
\end{equation}
Since $P$ and $Q$ are distinct atoms, one also has:
\begin{eqnarray}
P&=&P \wedge (Q \vee Q^{\perp})\\
&=&(P \wedge Q) \vee (P \wedge Q^{\perp})\\
&=&P \wedge Q^{\perp}\\
& \leq & Q^{\perp}
\end{eqnarray}
It follows that $X \wedge P$ is orthogonal to $X \wedge Q$. Thus for all $X\in \cL$
\begin{equation}
\label{boolean}
\phi(X)=\phi(X \wedge P)+ \phi(X \wedge Q)
\end{equation}
Since $\phi(A \wedge B)>\phi(A)\phi(B)$, we have $\phi(A)>0$ and $\phi(B)>0$. Inserting $X=A$ and $X=B$ into equation (\ref{boolean}), we see that we have the following cases:
\begin{itemize}
\item[(i)] $\phi(A \wedge P)>0$ or $\phi(A \wedge Q)>0$\\
and
\item[(ii)] $\phi(B \wedge P)>0$ or $\phi(B \wedge Q)>0$
\end{itemize}

Suppose that $\phi(A \wedge P)>0$ and $\phi(B \wedge Q)>0$. (The other combinations of the possibilities in (i) and (ii) can be handled exactly the same way.) Then $0 < A \wedge P \leq P$ and $0< B \wedge Q \leq Q$. Since $P$ and $Q$ are atoms, $A \wedge P = P$ and $B \wedge Q = Q$, which entails $P \leq A$ and $Q \leq B$. We now separate the cases according to how $P$ and $A$ and $Q$ and $B$ are related:
\begin{itemize}
\item if $P=A$ and $Q=B$, then $A \wedge B=P \wedge Q=0$ because $P$ and $Q$ are distinct atoms. Hence $0=\phi(A \wedge B)<\phi(A)\phi(B)$. This is a contradiction because $A$ and $B$ were assumed to be positively correlated in $\phi$.
\item If $P<A$ and $Q=B$, then $P^{\perp} \wedge A > 0$ since $A=P \vee (P^{\perp} \wedge A)$. Since $\phi$ is faithful, we have then $\phi(P^{\perp} \wedge A) > 0$. Inserting $X=P^\c  \wedge A$ into quation (\ref{boolean}) we obtain
\begin{equation}
0<\phi(P^\c  \wedge A)=\phi((P^\c  \wedge A) \wedge P)+ \phi((P^\c  \wedge A) \wedge Q)
\end{equation}
which entails $\phi(A \wedge P^{\perp} \wedge Q)>0$. Thus
\begin{equation}\label{eq:help3}
0 < A \wedge P^{\perp} \wedge Q \leq Q
\end{equation}
Since $Q$ is an atom, equation (\ref{eq:help3}) entails
\begin{equation}
A\es P^\c \es Q=Q
\end{equation}
from which $A\es P^\c\geq Q$ follows. Thus $Q \leq A$. The inequalities $P \leq A$ and $Q \leq A$ entail $P\vagy Q\leq A$, thus $1=\phi(P \vee Q) \leq \phi(A)$. Hence $\phi(A \wedge B) \leq \phi(A)\phi(B)$. This contradicts the indirect assumption that $A$ and $B$ are positively correlated in $\phi$.
\item If $P<A$ and $Q<B$, a contradiction follows in a similar way; we leave out the details.
\end{itemize}
Therefore $\phi(P \vee Q)<1$.

We now show that if $P$ and $Q$ are distinct $\phi$-atoms in $\cL$ such that $\phi(P \vee Q)<1$, then there are elements $A$ and $B$ in $\cL$ such that $\phi(A \wedge B)>\phi(A)\phi(B)$. Let $A := P \vee Q$ and $B := P$. Then
\begin{eqnarray}
\phi(A \wedge B)&=&\phi((P \vee Q) \wedge P)\\
&=&\phi(P)\\
&>&\phi(P\vee Q)\phi(P)\\
&=&\phi(A)\phi(B)
\end{eqnarray}
\end{proof}

By Propositions \ref{closed_Boolean} and \ref{Gyenis_not_common_cause_closed} and Lemma \ref{correlation_Boolean}, we get the following proposition.

\begin{proposition}[cf. \cite{GyenisZ-Redei2011} Theorem 1]
\label{characterization_Boolean}
Let $\mathcal{L}$ be a $\sigma$-complete Boolean algebra, let $\phi$ be a faithful $\sigma$-additive probability measure on $\mathcal{L}$ and let $\mathcal{L}$ contain two distinct elements $A$ and $B$ such that $\phi(A \wedge B) > \phi(A)\phi(B)$.
$(\mathcal{L}, \phi)$ is common cause closed if and only if $\mathcal{L}$ has at most one $\phi$-atom.
\end{proposition}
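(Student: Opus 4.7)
The plan is to derive this proposition as an immediate corollary of three results already established in the paper. No fresh technical work is required: Proposition \ref{closed_Boolean} handles the sufficient direction, while Lemma \ref{correlation_Boolean} and Proposition \ref{Gyenis_not_common_cause_closed} combine to give the necessary direction.

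For the sufficiency direction, assuming $\mathcal{L}$ has at most one $\phi$-atom, Proposition \ref{closed_Boolean} applies verbatim (its hypotheses are exactly the $\sigma$-complete Boolean algebra with faithful $\sigma$-additive measure and at-most-one-atom assumptions) and immediately yields that $(\mathcal{L},\phi)$ is common cause closed. The standing hypothesis about the existence of a correlating pair plays no role in this direction; it is only relevant for making the converse nontrivial.

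For the necessity direction, I would argue by contraposition: suppose $\mathcal{L}$ contains two distinct $\phi$-atoms $P$ and $Q$. Since $\phi$ is faithful, these are genuine atoms by Lemma \ref{faithful}, so the atomic hypothesis of Lemma \ref{correlation_Boolean} is met. The standing hypothesis of the proposition supplies distinct elements $A,B\in\mathcal{L}$ with $\phi(A\wedge B)>\phi(A)\phi(B)$, so the correlation hypothesis of Lemma \ref{correlation_Boolean} is also met. The lemma then delivers $\phi(P\vee Q)<1$, which is precisely the additional condition required in Proposition \ref{Gyenis_not_common_cause_closed}. That proposition concludes that $(\mathcal{L},\phi)$ is not common cause closed, completing the contraposition.

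The only substantive content in this chain is the application of Lemma \ref{correlation_Boolean}, whose proof already absorbed all of the real work, namely a case analysis based on Boolean distributivity showing that if $\phi(P\vee Q)=1$ then any positively correlated pair would have to contain both atoms, contradicting the correlation. The present proposition is essentially just the observation that the three earlier results dovetail perfectly, so I expect no obstacle beyond verifying that the hypotheses line up as described.
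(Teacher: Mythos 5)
Your proposal matches the paper's own derivation exactly: the paper obtains Proposition \ref{characterization_Boolean} precisely by combining Proposition \ref{closed_Boolean} for sufficiency with Lemma \ref{correlation_Boolean} and Proposition \ref{Gyenis_not_common_cause_closed} for necessity. The hypotheses line up as you describe, and no further argument is needed.
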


Next we consider quantum probability spaces. In this case, a result similar to Lemma  \ref{correlation_Boolean} holds.

\begin{lemma}
\label{correlation_algebra}
Let $\mathcal{L}$ be the orthomodular lattice of projections of a von Neumann algebra $\mathfrak{N}$ and let $\phi$ be a faithful normal state of $\mathfrak{N}$. Assume $\mathcal{L}$ contains two atoms $P$ and $Q$ (which are $\phi$-atoms as well by Lemma \ref{faithful}). There are two distinct elements $A$ and $B$ in $\mathcal{L}$ such that $\phi(A \wedge B)>\phi(A)\phi(B)$ if and only if $\phi(P \vee Q)<1$.
\end{lemma}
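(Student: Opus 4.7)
The plan is to mirror Lemma \ref{correlation_Boolean}, proving both directions of the biconditional. The ``if'' direction is essentially identical to the classical case: assuming $\phi(P\vagy Q)<1$, I would set $A:=P\vagy Q$ and $B:=P$, which are compatible since $B\leq A$, and observe that $\phi(A\es B)=\phi(P)>\phi(P\vagy Q)\phi(P)=\phi(A)\phi(B)$ exactly as in the closing computation of the Boolean proof.

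For the ``only if'' direction I would argue the contrapositive: if $\phi(P\vagy Q)=1$, then no distinct compatible pair is positively correlated. Faithfulness of $\phi$ immediately promotes $\phi(P\vagy Q)=1$ to $P\vagy Q=I$. The Boolean proof's distributivity-based derivation of $P\leq Q^{\perp}$ is unavailable here, so instead I would exploit the structure of $\mathfrak{N}$ through the central supports of the two atoms. Since $P$ is minimal, its central support $c(P)$ must itself be a minimal central projection (any decomposition $c(P)=e_1+e_2$ into nonzero orthogonal central projections would split $P=e_1P+e_2P$ into orthogonal pieces, contradicting minimality), and similarly for $c(Q)$. Hence either $c(P)=c(Q)$ or $c(P)\es c(Q)=0$.

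In the first subcase, writing $c:=c(P)=c(Q)$, the reduced algebra $c\mathfrak{N}$ is a type I factor containing $P$ and $Q$ as rank-one projections; the constraint $P\vagy Q\leq c$ combined with $P\vagy Q=I$ forces $c=I$, and a rank count (since $P\neq Q$, the join $P\vagy Q$ has rank two) pins $\mathfrak{N}$ down to $M_2(\mathbb{C})$. In the second subcase, $P$ and $Q$ lie in orthogonal central summands and are therefore themselves orthogonal, so $P\vagy Q=P+Q=I$; this together with $P\leq c(P)$ and $Q\leq c(Q)$ forces $P=c(P)$ and $Q=c(Q)$, yielding $\mathfrak{N}\cong\mathbb{C}\oplus\mathbb{C}$.

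It then remains to check by direct inspection that neither $M_2(\mathbb{C})$ nor $\mathbb{C}\oplus\mathbb{C}$ admits a positively correlated pair of distinct compatible projections under any faithful state. In $M_2(\mathbb{C})$, two distinct commuting projections are either an orthogonal pair of rank-one projections, for which $\phi(A\es B)=0<\phi(A)\phi(B)$, or a pair involving $0$ or $I$, in which case $\phi(A\es B)=\phi(A)\phi(B)$ trivially. In $\mathbb{C}\oplus\mathbb{C}$ the only nontrivial distinct compatible pair is $(P,P^\perp)$, for which again $A\es B=0$. The main obstacle is the structural reduction step: where the Boolean proof used distributivity to conclude $P\leq Q^\perp$, the quantum argument must instead invoke the minimality of $c(P)$ and $c(Q)$ in the center to identify $\mathfrak{N}$ explicitly.
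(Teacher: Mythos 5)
Your proof is correct, but it takes a genuinely different route from the paper's. The paper first splits $I=P_I+P_n$ into the type~I and non-type-I central summands of $\mathfrak{N}$: if $P_n>0$ then $P\vee Q\leq P_I<I$ immediately; otherwise $\mathfrak{N}$ is of type I and the proof runs through subcases $\mathrm{I}_1$ (reduced to Lemma \ref{correlation_Boolean}, since the algebra is abelian), $\mathrm{I}_2$ (shown to be incompatible with the existence of any positively correlated pair, by producing two atoms whose join is $I$ and deriving a contradiction), and $\mathrm{I}_n$ with $n\geq 3$ (three mutually orthogonal atoms force $\phi(P\vee Q)<1$). You instead argue the contrapositive: from $P\vee Q=I$ and the minimality of the central supports $c(P)$, $c(Q)$ you pin the algebra down to exactly $M_2(\mathbb{C})$ or $\mathbb{C}\oplus\mathbb{C}$ and verify by inspection that neither admits a positive correlation. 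Your route is more self-contained (it needs only that a factor with a minimal projection is some $B(H)$, not the full type decomposition) and it cleanly handles non-homogeneous type~I algebras, which the paper's subcase list (i)--(iii) glosses over; the paper's route buys parallelism with the classical case by recycling Lemma \ref{correlation_Boolean} for the abelian part and keeps the type classification in view, which matters for the later discussion of quantum field theory. One small remark: the lemma as used in Proposition \ref{characterization_algebra} does not restrict $A,B$ to be compatible, whereas your verification is phrased for compatible pairs; this is harmless, since in $M_2(\mathbb{C})$ any two distinct nontrivial projections, commuting or not, satisfy $A\wedge B=0$, but it is worth saying explicitly.
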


\begin{proof}
We prove first that if the von Neumann lattice $\mathcal{L}$ has two distinct atoms $P$ and $Q$ and two projections $A$ and $B$ such that $\phi(A \wedge B)>\phi(A)\phi(B)$, then $\phi(P \vee  Q)<1$.

Let $\mathfrak{C}(\mathfrak{N})$ be the center of $\mathfrak{N}$. By Theorem 6.5.2 in \cite{Kadison-Ringrose1986} there are projections $P_{I},P_{n} \in \mathfrak{C}(\mathfrak{N})$ such that
\begin{itemize}
\item $P_{I}+P_{n}=I$
\item $\mathfrak{N}P_{I}$ is of type I
\item $\mathfrak{N}P_{n}$ is \emph{not} of type I
\end{itemize}
We again separate the cases:
\begin{itemize}
\item[(a)] Suppose that $P_{n} > 0$. If $P$ and $Q$ are distinct atoms, then $\phi(P \vee  Q)<1$ since $P,Q \leq P_{I} < P_{I}+P_{n}=I$.

\item[(b)] Let $P_{n}=0$. Then $\mathfrak{N}$ is of type I. \\
Subcases:
\begin{itemize}
\item[(i)] Suppose that $\mathfrak{N}$ is of type $\text{I}_{2}$. \\
Since $\phi(A \wedge B)>\phi(A)\phi(B)$, we have $\phi(A)>0$ and $\phi(B)>0$. Thus there are atoms $R_{1}$ and $R_{2}$ such that $R_{1} \leq A$ and $R_{2} \leq B$. If $R_1=A$ and $R_2=B$, then $A \wedge B=R_1 \wedge R_2=0$. Hence $\phi(A \wedge B) < \phi(A)\phi(B)$. This is a contradiction. If $R_1 < A$, then $0 < A \wedge R_1^{\perp}$. Thus there is an atom $R_3$ such that $R_3 \leq A \wedge R_1^{\perp}$. Since $\mathfrak{N}$ is of type $\text{I}_2$, $R_1 \vee R_3 = I$, which implies $I = R_1 \vee R_3 \leq A$. Hence $\phi(A \wedge B) \leq \phi(A)\phi(B)$. This is a contradiction. A contradiction follows similarly in the case where $R_2 <B$. Hence $\mathfrak{N}$ is not of type $\text{I}_{2}$.
\item[(ii)] Suppose $\mathfrak{N}$ is of type $\text{I}_1$. \\
Then there are two distinct atoms $P$ and $Q$ such that $\phi(P \vee Q)<1$ by Lemma \ref{correlation_Boolean} because the von Neumann algebra $\mathfrak{N}$ is abelian in this case.
\item[(iii)] Suppose $\mathfrak{N}$ is of type $\text{I}_n$, where $n \geq 3$.\\
Then there are three mutually orthogonal projections $P$, $Q$, and $R$. Thus $\phi(P \vee Q)<1$.
    \end{itemize}
\end{itemize}

The proof that if there are two atoms $P$ and $Q$ in von Neumann lattice $\mathcal{L}$ such that $\phi(P \vee Q)<1$ then there are elements $A,B$ in $\mathcal{L}$ that are positively correlated in $\phi$ is the same as the corresponding proof
in Lemma \ref{correlation_Boolean}.
\end{proof}

By Propositions \ref{closed_algebra} and \ref{Gyenis_not_common_cause_closed} and Lemma \ref{correlation_algebra}, we get the following proposition.

\begin{proposition}
\label{characterization_algebra}
Let $\mathcal{L}$ be an orthomodular lattice of projections of a von Neumann algebra $\mathfrak{N}$, let $\phi$ be a faithful normal state of $\mathfrak{N}$ and let $\mathfrak{N}$ contain two distinct projections $A$ and $B$ such that $\phi(A \wedge B) > \phi(A)\phi(B)$. $(\mathcal{L}, \phi)$ is common cause closed if and only if $\mathcal{L}$ has at most one $\phi$-atom.
\end{proposition}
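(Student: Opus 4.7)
The plan is to assemble the statement from the three ingredients already in hand: the sufficient condition Proposition~\ref{closed_algebra}, the ``contains two $\phi$-atoms with $\phi(P\vee Q)<1$'' necessary condition Proposition~\ref{Gyenis_not_common_cause_closed}, and the equivalence Lemma~\ref{correlation_algebra} which, for von Neumann lattices, converts the somewhat technical hypothesis $\phi(P\vee Q)<1$ into the more natural hypothesis about the existence of a correlated pair. Once these are lined up, both implications are essentially one-line deductions.

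For the ``if'' direction I would simply invoke Proposition~\ref{closed_algebra}: if $\mathcal{L}$ has at most one $\phi$-atom, then $(\mathcal{L},\phi)$ is common cause closed. The standing hypothesis on the existence of the correlated pair $A,B$ is not used here; it is carried only to make the statement non-vacuous and to power the converse direction.

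For the ``only if'' direction I would argue by contrapositive. Assume $\mathcal{L}$ contains two distinct $\phi$-atoms $P$ and $Q$. By faithfulness of $\phi$, these are also algebraic atoms (Lemma~\ref{faithful}), so the hypotheses of Lemma~\ref{correlation_algebra} are met. Feeding the standing hypothesis that there exists a correlated pair $A,B\in\mathcal{L}$ into that lemma yields $\phi(P\vee Q)<1$. This is exactly the remaining hypothesis needed to apply Proposition~\ref{Gyenis_not_common_cause_closed}, which then furnishes a correlation in $(\mathcal{L},\phi)$ admitting no Reichenbachian common cause, so $(\mathcal{L},\phi)$ is not common cause closed.

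I do not expect a substantive obstacle, since all the hard work — in particular the case analysis over the type decomposition $\mathfrak{N}=\mathfrak{N}P_I\oplus\mathfrak{N}P_n$ with its treatment of the exceptional type $\mathrm{I}_2$ — has already been carried out in Lemma~\ref{correlation_algebra}. The only subtlety worth flagging in writing up the proof is why the existence of a correlated pair must appear in the hypothesis at all: without it, a probability space for which $\phi(P\vee Q)=1$ on its two distinct $\phi$-atoms would, by Lemma~\ref{correlation_algebra}, contain no correlated pair whatsoever and hence be common cause closed vacuously, which would falsify the ``only if'' direction. With that hypothesis in place, the three cited results combine cleanly to give the characterization.
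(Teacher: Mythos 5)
Your proposal is correct and follows exactly the paper's own route: the ``if'' direction is Proposition~\ref{closed_algebra}, and the ``only if'' direction combines Lemma~\ref{correlation_algebra} with Proposition~\ref{Gyenis_not_common_cause_closed} via the contrapositive, which is precisely how the paper assembles the result. Your remark on why the correlated-pair hypothesis is needed to avoid vacuous common cause closedness is a sound observation consistent with the paper's framing.
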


\section{Common cause closedness and the Common Cause Principle\label{sec:discussion}}

The philosophical significance of common cause closedness of probability spaces is that they display a particular form of causal completeness: these theories themselves can explain \emph{all} the correlations they predict, and the explanation is \emph{exclusively} in terms of common causes they contain. Hence these theories comply in an extreme manner with the Common Cause Principle.

Recall that the Common Cause Principle states that a correlation between events is either due to a direct causal link between the correlated events, or there is a third event, a common cause that explains the correlation. This principle, which goes back to Reichenbach \cite{Reichenbach1956}, and which was sharply articulated mainly by Salmon \cite{Salmon1978}, \cite{Salmon1984}, is a strong claim about the relation of correlations and causality, and its status has been subject of intense analysis in philosophy of science.

The initial phase of the investigations in works of Salmon and others such as van Fraassen \cite{Fraassen1982b}, Cartwright \cite{Cartwright1987}, Sober \cite{Sober1988}, \cite{Sober1989}, \cite{Sober2001} was semi-formal: precise models of the phenomena the papers analyzed were typically not given in terms of explicitly specified probability measure spaces. Thus the reasoning remained vague in crucial respects. The recent trend in investigating the Common Cause Principle is different: The concepts and claims related to the principle get a mathematically precise, sharp definition, they are analyzed mathematically and the results are interpreted from the perspective of their metaphysical-philosophical ramification. Examples of this approach are especially the works of the ``Cracow school", of the ``Bern school" and of the ``Budapest school" (see the books \cite{Wronski2014}, \cite{Hofer-Redei-Szabo-PCC}, \cite{Wuethrich2004}, the references therein, and also the recent papers by Sober and Steel \cite{Sober-Steel2012}, Marczyk and Wronski \cite{Marczyk-Wronski2013a}), Grasshoff et al. \cite{Grasshoff-Portmann-Wuthrich2005}, and Portmann and W\"uethrich \cite{Portmann-Wuthrich2007}.

There have been two broad and important consequences of the more technical analysis of the Common Cause Principle: One is the insight that assessing the status of the Common Cause Principle is a much more complicated and subtle matter than previously thought. The other is that the mathematical analysis has established direct contacts to those scientific theories that have empirically confirmed claims which are directly relevant for the truth (or otherwise) of the Common Cause Principle -- first and foremost to non-relativistic quantum mechanics of finite degrees of freedom and to relativistic quantum field theory.

The philosophical reason why assessing the status of the Common Cause Principle is difficult, is well known from the history of philosophy: the Common Cause Principle makes \emph{two} general existential claims, and we know that falsifying empirically a hypothesis making even \emph{one} existential claim is problematic because one cannot search the whole universe, past, present and future to find a falsifying instance. This was most unambiguously stated by Popper \cite{Popper1959}. In short: Principles such as the Common Cause Principle are of metaphysical character. History of philosophy, from Kant and Hume through Popper teaches us that metaphysical claims cannot be verified or even falsified conclusively.

This general difficulty takes a very specific form in connection with the Common Cause Principle: It is certainly not enough for a falsification of the Common Cause Principle to display a probabilistic theory that contains a correlation but no common cause for it because the hypothetical common cause may be hidden: it may not be part of the theory predicting the correlation, but it may very well be part of a larger, more comprehensive theory having a more detailed picture of the world by using a richer Boolean algebra in its probabilistic part. We may never find that larger theory but we know that its existence is logically possible: every classical probability theory that is not common cause closed can be embedded into a larger one which is; this is one of the conclusions of the formal analysis in  \cite{Hofer-Redei-Szabo1999}, \cite{GyenisZ-Redei2011} (also see \cite{Hofer-Redei-Szabo-PCC}[Chapter 6] and \cite{Marczyk-Wronski2013a}). This feature of classical probability spaces is called common cause cause completability of not common cause closed probability spaces.

Thus, rather than trying to verify or falsify the Common Cause Principle, one should settle on a more modest goal: to look at our best, confirmed scientific theories to see whether they offer evidence in favor or against the truth of the Common Cause Principle (for a more detailed discussions of this point see \cite{Hofer-Redei-Szabo-PCC}[Chapter 10] and \cite{Redei2014CCPAssessment}). It is in connection with such an evaluation of the Common Cause Principle that common cause closedness becomes relevant: If a probabilistic theory is such that the probability theory it employs to describe phenomena is common cause closed then that theory is a strong candidate for being a confirming evidence in favor of the Common Cause Principle.

Crucial in this respect is (relativistic) quantum field theory, which, by its very construction, is intended to comply fully with causality. Since quantum field theory, at least in its mathematically precise, ``axiomatic" form \cite{Haag1992}, \cite{Araki1999}, is based on quantum probability theory, characterizing common cause closedness of such quantum probability theories is relevant from the perspective of evaluating the causal behavior of quantum field theory. The characterization in terms of the measure theoretic atomicity given in this paper makes contact to a deep structural property of quantum field theory: The atomicity structure of quantum probability theories is intimately linked to the Murray-von Neumann classification of (factor) von Neumann algebras: Type $II_1$ and type $III$ factor von Neumann algebras with faithful states define purely non-atomic quantum probability theories \cite{Redei-Summers2007}. Furthermore, it is a fundamental feature of relativistic quantum field theories formulated in terms of covariant local nets of von Neumann algebras that the algebras representing strictly local observables are type $III$ factors \cite{Haag1992}[Section V.6]. Thus these algebras typically define measure theoretically purely non-atomic quantum probability theories. Consequently, relativistic quantum field theories are common cause closed and hence very good candidates for being causally complete in the sense of being capable of explaining causally all the correlations they predict.

Note the careful wording of the previous sentence: quantum field theory has not been proved to be in full compliance with locality. This is because in this framework one can impose locality conditions on the common cause, in addition to the defining conditions (\ref{occ1})-(\ref{occ4}), and it is an open problem whether the theory contains local common causes for all the correlations that are in need of a common cause explanation \cite{Redei-Summers2002}, \cite{Redei-Summers2005corr}, \cite{Hofer-Redei-Szabo-PCC}[Chapter 8]. But if a quantum probability theory is not common cause closed, then one cannot expect the theory to have common causes of the correlations that would need to be explained this way. This happens in discrete (lattice) quantum field theory of finite degrees of freedom. In these theories there exist correlations between observables pertaining to spacelike separated local algebras for which there exist no local common cause in the theory simply because these probability theories have many atoms and thus they contain a lot of correlations for which there are no common causes \emph{at all} in the theory \cite{Hofer-Vecsernyes2012a}, \cite{Hofer-Redei-Szabo-PCC}[Chapter 8].

\section{Some open questions\label{sec:open}}
The results in sections \ref{sec:sufficient} and \ref{sec:necessary} give a complete characterization of common cause closedness of probability theories understood as projection lattices of both commutative and non-commutative von Neumann algebras. Thus we have a characterization of common cause closedness of classical and quantum probability theories. As the propositions show, the same structural property is equivalent to common cause closedness in classical and quantum probability theories (not having more than one measure theoretic atom). It remains open however whether this structural property is equivalent to common cause closedness of general probability theories as well. Specifically, three open problems can be formulated.

The first problem is related to a sufficient condition of common cause closedness of general probability theories. We have seen that $Q$-decomposability plays an important role in common cause closedness. Lemma \ref{one_atom_Boolean} and Lemma \ref{one_atom_algebra} about $Q$-decomposability lead to the following problem:

\begin{problem}\label{prob:sufficient}
{\rm
Let $\mathcal{L}$ be a $\sigma$-complete orthomodular lattice and let $\phi$ be a faithful $\sigma$-additive probability measure on $\mathcal{L}$. Does the assumption that  $\mathcal{L}$ has at most one $\phi$-atom $Q$ entail that $\phi$ is $Q$-decomposable?
}
\end{problem}

The second problem is related to a necessary condition for common cause closedness in a general orthomodular lattice. We get a characterization of common cause closedness by using Lemma \ref{correlation_Boolean} and Lemma \ref{correlation_algebra}. These lemmas lead to the following problem:

\begin{problem}\label{prob:necessary}
{\rm
Let $\mathcal{L}$ be a $\sigma$-complete orthomodular lattice which has two distinct atoms and let $\phi$ be a faithful $\sigma$-additive probability measure on $\mathcal{L}$. Are the following two conditions equivalent?
\begin{enumerate}
\item There are two distinct elements $A$ and $B$ in $\mathcal{L}$ such that $\phi(A \wedge B)>\phi(A)\phi(B)$.
\item There are two distinct $\phi$-atoms $P$ and $Q$ such that $\phi(P \vee Q)<1$.
\end{enumerate}
If these conditions are not equivalent, is there a natural condition which is equivalent to Condition 2?
}
\end{problem}

%If the answer to the first question is affirmative, we can get a characterization of common cause closedness in a general orthomodular lattice by using Corollary \ref{closed} and Proposition \ref{Gyenis_not_common_cause_closed}.

The third problem is related to common cause completability. As mentioned in section \ref{sec:discussion}, common cause completability makes falsification of the Common Cause Principle difficult because it allows evading falsification by referring to hidden common causes. Not common cause closed classical probability spaces are common cause completable \cite{GyenisZ-Redei2011} (also see \cite{Hofer-Redei-Szabo-PCC}[Chapter 6] and \cite{Marczyk-Wronski2013a}). By the characterization of common cause closedness (Theorem \ref{characterization_algebra}), we can specify the quantum probability measure spaces which are not common cause closed. For example, the orthomodular lattice of all projections on a Hilbert space (Hilbert lattices) are not common cause closed because these probability theories contain a lot of measure theoretic atoms. Then the following question arises; is there a larger probability measure space into which these non-common cause closed quantum probability spaces are embeddable? The answer to this question is unknown. More generally we have

\begin{problem}
Can a general probability space $(\cL,\phi)$ be embedded into a larger one $(\cL',\phi')$ which has at most one $\phi$-atom?
\end{problem}
By an embedding is meant here an injective ortholattice homomorphism $h\colon\cL\to\cL'$ which preserves $\phi$ in the sense $\phi'(h(A))=\phi(A)$ for all $A\in\cL$.

Recently the notion of common cause has been given a general definition in terms of non-selective operations understood as completely positive, unit preserving linear maps on \C algebras \cite{Redei2014SHPMP}. That definition makes it possible to formulate common cause closedness of quantum field theories on possibly non-flat spacetimes in the categorial approach to relativistic quantum field theory. It would be interesting to find out what role the structural properties of the operator algebras play in common cause closedness (or lack of it) in such theories. Nothing is known about this question, so we omit the precise definitions.

\section*{Acknowledgments}

\noindent
Y. Kitajima is supported by the JSPS KAKENHI, No.23701009 and the John Templeton Foundation Grant ID 35771. \\
M. R\'edei acknowledges support of the Hungarian Scientific
Research Found (OTKA, contract number: K100715), and thanks the Institute of Philosophy of the Hungarian Academy of Sciences, with which he was affiliated as Honorary Research Fellow while this paper was written.

\bibliography{RedeiBib}
\end{document}